\def\R{\mathbb{R}}
\begin{document}

\title*{Analytical and numerical results  for American style of perpetual put options through transformation into nonlinear stationary  Black-Scholes equations}
\titlerunning{Analytical and numerical results  for American style of perpetual put options}
\author{Maria do Ros\'ario Grossinho\inst{1}, Yaser Faghan \inst{1}\and 
Daniel \v{S}ev\v{c}ovi\v{c}\inst{2}}
\institute{Instituto Superior de Economia e Gest\~ao and CEMAPRE, Universidade de Lisboa, Portugal
\texttt{mrg@iseg.ulisboa.pt}\\
\and Dept.\ Applied Mathematics \& Statistics, Comenius University, 842 48  Bratislava, Slovakia. \texttt{sevcovic@fmph.uniba.sk}}

\maketitle

\begin{abstract}
We analyze and calculate the early exercise boundary for a class of stationary generalized Black-Scholes equations in which the volatility function depends on the second derivative of the option price itself. A motivation for studying the nonlinear Black Scholes equation with a nonlinear volatility arises from option pricing models including, e.g., non-zero transaction
costs, investors preferences, feedback and illiquid markets effects and risk from unprotected portfolio. We present a method how to transform the problem of American style of perpetual put options into a solution of an ordinary differential equation and implicit equation for the free boundary position. We finally present results of numerical approximation of the early exercise boundary, option price and their dependence on model parameters.
\end{abstract}


\noindent\textit{Keywords and phrases}{
Option pricing, nonlinear Black-Scholes equation, transaction costs, perpetual American put option,  early exercise boundary}

\section{Introduction}

In this paper, we are concerned with a financial option with no fixed maturity and no exercise limit, called the perpetual option. This type of an option, which can be exercised at any time, can be considered as the American style of an option. However, in this case, the time to maturity has no impact on the price of the option. From the mathematical point of view, this leads to a solution of the stationary  Black--Scholes problem.
More precisely,  the valuation problem is transformed into the free boundary problem that consists of the construction of the function  $V(S)$ together with the early exercise boundary point $\varrho$ satisfying the following conditions:
\begin{equation*}
\frac{1}{2} \sigma^2 S^2 \partial^2_S V + r S \partial_S V -r V =0, \qquad S > \varrho,
\end{equation*}
and
\begin{equation*}
V(\varrho) = E-\varrho, \quad 
\partial_S  V (\varrho) = -1,
\quad  V(+\infty) = 0
\end{equation*}
(c.f. \cite{journal6},\cite{monograph0}, \cite{monograph2}). The function $V$ is  defined in the domain $S > \varrho$, where $\varrho$ is the free boundary position. If the diffusion coefficient $\sigma>0$ is constant then we are, in fact, considering stationary solutions of the  classical linear Black--Scholes parabolic equation. However, we suppose that $\sigma$ depends on the asset price $S$ and the product of the asset price $S$ and  the second derivative (Gamma) of the option price $H=S\partial_S^2V$, i.e. 
\begin{equation} 
\label{Y16}
\sigma=\sigma(S,H)=\sigma(S, S\partial _S^2V).
\end{equation} 
Let us mention our motivation for studying a nonlinear volatility of the form (\ref{Y16}). 
As it is known, the classical linear Black Scholes model (c.f \cite{journal14},\cite{monograph1}) was derived under several restrictive  assumptions that did not reflect the real market. In fact, no transaction costs were considered, the volatility was supposed to be constant, only liquid and complete  markets were considered.  Since then, several results have appeared in the literature relaxing these assumptions in order to overcome some drawbacks they created in practice. Regarding the volatility, it has been justified in practice that it is not constant and it may depend on the asset price itself. With this volatility function (\ref{Y16}), the classical model is generalized in such a way that it allows to consider non-zero transaction costs, market feedback and illiquid market effects due to large trading volumes, risk from investors preferences, etc. Mathematically, the problem will lose its linear feature, since the equation becomes a nonlinear partial differential equation (see e.g. \cite{monograph2}). 

One of the first nonlinear models  taking into account non-trivial transaction costs was proposed by Leland  \cite{journal16} for put or call options, later extended for more general types of option by Hoggard, Whalley and Wilmott \cite{journal17}. Avellaneda and Paras \cite{journal1} proposed the jumping volatility model in which the volatility changes with respect to the sign of the Gamma of the option. Frey and Patie \cite{journal9}, Frey and Stremme \cite{journal10}  developed models dealing with feedback and illiquid market impact due to large trading (see also \cite{journal17}). We also mention the so-called the Risk adjusted pricing model (RAPM) derived by Kratka \cite{journal15} and Janda\v{c}ka and \v{S}ev\v{c}ovi\v{c} \cite{journal14} in which both the transaction costs as well as the risk from unprotected portfolio are taken into account. In the RAPM model the volatility function depends on $H=S\partial^2_S V$ only, and it has the form:
\begin{equation}
 \sigma(H)^2 = \sigma^2_0 (1 + \lambda H^\frac13 ) = \sigma^2_0 (1 + \lambda (S\partial^2_S V)^\frac13 ),
 \label{intro-RAPM}
\end{equation}
where $\sigma_0>0$ is the constant historical volatility of the underlying asset and $\lambda$ is a model parameter depending on the transaction cost rate and the unprotected portfolio risk exposure. Recently, explicit solutions to European style of options described by the nonlinear Black--Scholes equation with varying volatility have been derived by Bordag  \emph{et al.} \cite{journal5} for the Frey and Patie as well as the RAPM models. 

Barnes and Soner \cite{journal3} proposed a model assuming that investor's preferences are shown by an exponential utility function. In this model, the volatility function depends on $H=S\partial^2_S V$ as well as $S$, and it has the following form:
\begin{equation}
\sigma(S, H)^2 
= \sigma^2_0 \left(1+\Psi(a^2 S H )\right)
= \sigma^2_0 \left(1+\Psi(a^2 S^2\partial^2_S V)\right),
\label{intro-barles}
\end{equation}
where the function $\Psi$ is the unique solution to the ODE: $\Psi^\prime(x) = (\Psi(x)+1)/(2 \sqrt{x\Psi(x)} -x), \Psi(0)=0$ and $a\ge 0$ is a constant depending transaction costs and investor's risk aversion parameter (see \cite{journal3} for details). Notice that $\Psi(x)\ge 0$ for all $x\ge 0$ and it has  the following asymptotic: $\Psi(x)=O(x^\frac{1}{3})$ for  $x\to 0$ and $\Psi(x)=O(x)$ for $x\to\infty$. 

Finally, we also mention the nonlinear volatility model developed by Mariani and Rial, Amster, Averbuj  \cite{journal0}, where transaction costs depend on the volume of trading assets in a linear decreasing way. Recently, it was generalized for arbitrary transaction cost functions by \v{S}ev\v{c}ovi\v{c} and \v{Z}it\v{n}ansk\'a in the paper \cite{journal21}.

The paper is organized as follows. In the next section, we recall the mathematical formulation of the perpetual American put option pricing model. Furthermore, we prove the existence and uniqueness of a solution to the free boundary problem. We derive a formula for the option price and a single implicit equation for the free boundary position $\varrho$. In Section 3 we construct suitable sub-- and supper--solutions based on Merton's explicit solutions with constant volatility. Finally, in Section 4, we present computational results of the free boundary position $\varrho$, the option price $V(S)$ and their dependence on model parameters.

\section{Perpetual American Put Option}
In this section we analyze the problem of the American style of perpetual put options. As referred previously, perpetual options are financial options with no fixed maturity and no exercise limit. As they can be exercised at any time, they have infinite maturity $T=+\infty$.

Consider the American style of a put option with the volatility $\sigma$ of the form (\ref{Y16}). Suppose that there exist a limit of the solution $V$ and an early exercise boundary position $S_f$ for the maturity $T\to\infty$. 
The pair consisting of the limiting price $V = V(S)=\lim_{T-t\to\infty}V(S,t)$ and the limiting early exercise boundary position $\varrho=\lim_{T-t\to\infty}S_f(t)$ of the  perpetual put option is a solution to the stationary nonlinear Black--Scholes problem (c.f. \cite{journal11}):
\begin{equation}
\frac{1}{2} \sigma(S, S\partial^2_S V) S^2 \partial^2_S V + r S \partial_S V -r V =0, \qquad S > \varrho,
\label{pperpetual}
\end{equation}
and
\begin{equation}
 V(\varrho) = E-\varrho, \quad 
\partial_S  V (\varrho) = -1,
\quad  V(+\infty) = 0
\label{pperpetual-bc}
\end{equation}
(c.f. \cite{monograph1}, \cite{monograph2}, \cite{journal19}).
We shall prove that under certain assumptions made on the volatility function the perpetual American put option problem (\ref{pperpetual})--(\ref{pperpetual-bc}) has the unique solution $(V(.),\varrho)$. We will present its explicit formula for the case when $\sigma=\sigma(H)$, i.e. the volatility depends on the term $H=S\partial^2_S V$ only. Furthermore, we will also present comparison results with explicit Merton's solutions recently obtained by the authors in  \cite{journal23}.

Throughout the paper we will assume that the volatility function $\sigma=\sigma(S,H)$ fulfills the following assumption:

\medskip
\noindent\textbf{Assumption 1.} {\it The volatility function $\sigma=\sigma(S,H)$ in (\ref{pperpetual}) is assumed to be a $C^1$ smooth nondecreasing function in the $H>0$ variable and $\sigma(S,H)\ge \sigma_0>0$ for any $S>0$ and $H\ge 0$ where $\sigma_0$ is a positive constant.}

\medskip
If we extend the volatility function $\sigma(S,H)$ by $\sigma(S,0)$ for negative values of $H$, i.e. $\sigma(S,H)=\sigma(S,0)$ for $H\le 0$ then the function
\begin{equation*}
\R\ni H\longmapsto \frac{1}{2}\sigma(S, H)^2 H \in \R
\label{nondecreasing}
\end{equation*}
is strictly increasing and therefore there exists the unique inverse function $\beta:\R\to \R$ such that 
\begin{equation}\label{Y0}
\frac12 \sigma(S, H)^2 H  = w \qquad\hbox{if and only if}\qquad H= \beta(x, w), \quad \hbox{where}\ \ S=e^x.
\end{equation}
Notice that the function $\beta$ is  a continuous and increasing function such that $\beta(0) =0$. 

\medskip
Concerning the inverse function we have the following useful lemma:

\begin{lemma}\label{lemma-beta}
Assume the volatility function $\sigma(S,H)$ satisfies Assumption 1. Then the inverse function $\beta$ has the following properties:
\begin{enumerate}
 \item $\beta(x,0)=0$ and $\frac{\beta(x,w)}{w}\le \frac{2}{\sigma^2_0}$ for all $x,w\in \R$;
 \item $\beta^\prime_w(x,w)\le  \frac{2}{\sigma^2_0}$ for all $x\in \R$ and $w>0$.
\end{enumerate}
\end{lemma}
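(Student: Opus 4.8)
The plan is to fix $S=e^x>0$ and study the scalar map $g(H):=\frac12\sigma(S,H)^2 H$, whose inverse in the $H$ variable is $\beta(x,\cdot)$ by definition (\ref{Y0}). Since $\sigma$ is $C^1$, the map $g$ is $C^1$, and for $H>0$ Assumption 1 gives $g'(H)=\big(\partial_H\sigma(S,H)\big)\sigma(S,H)\,H+\frac12\sigma(S,H)^2\ge \frac12\sigma_0^2>0$, using $\partial_H\sigma\ge 0$, $\sigma>0$ and $H>0$; on $H\le 0$ the extension makes $g$ the linear function $H\mapsto\frac12\sigma(S,0)^2 H$. This re-confirms the strict monotonicity of $g$ quoted just before the lemma and, by the inverse function theorem, shows that $\beta(x,\cdot)$ is $C^1$ with $\beta'_w(x,w)=1/g'(\beta(x,w))$.

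Part (2) is then immediate: for $w>0$ we have $H:=\beta(x,w)>0$ (because $g(0)=0$ and $g$ is increasing), so the lower bound $g'(H)\ge\frac12\sigma_0^2$ applies and $\beta'_w(x,w)=1/g'(H)\le 2/\sigma_0^2$. For part (1), $\beta(x,0)=0$ follows from $g(0)=0$; for the ratio bound I split on the sign of $w$. If $w>0$, then $H=\beta(x,w)>0$ and $w=\frac12\sigma(S,H)^2 H\ge\frac12\sigma_0^2 H$ gives $\beta(x,w)/w=H/w\le 2/\sigma_0^2$. If $w<0$, then $H<0$, the extension forces $\sigma(S,H)=\sigma(S,0)\ge\sigma_0$, and multiplying $\sigma(S,0)^2\ge\sigma_0^2$ by $H<0$ reverses the inequality, so $w=\frac12\sigma(S,0)^2 H\le\frac12\sigma_0^2 H<0$; dividing by $w<0$ reverses it once more and again yields $H/w\le 2/\sigma_0^2$. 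The borderline value $w=0$ is read through part (2): $\lim_{w\to 0}\beta(x,w)/w=\beta'_w(x,0)\le 2/\sigma_0^2$.

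The estimates themselves are elementary; the only points that require care — the effective ``main obstacle'' — are the sign bookkeeping in part (1) (tracking how $w<0$ forces $H<0$ and how dividing by a negative quantity flips the inequalities) and invoking the $C^1$ regularity together with the $H\le 0$ extension correctly, so that the inverse function theorem is legitimately applied and the bounds hold for all $w\in\R$, including the limiting interpretation of the ratio at $w=0$.
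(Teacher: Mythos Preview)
Your proof is correct and follows essentially the same approach as the paper: both use $\sigma\ge\sigma_0$ together with $\partial_H\sigma\ge 0$ to bound $w=\tfrac12\sigma^2 H$ from below by $\tfrac12\sigma_0^2 H$ (giving the ratio bound) and to bound the $w$-derivative of the defining relation from below (giving the derivative bound). Your formulation via $g(H)=\tfrac12\sigma(S,H)^2 H$ and the inverse function theorem, $\beta'_w=1/g'(H)$, is merely a cleaner repackaging of the paper's implicit differentiation, and your explicit treatment of the sign-flips in the $w<0$ case spells out what the paper dismisses with ``we can proceed similarly as before.''
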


\begin{proof}
Clearly, $\beta(x,0)=0$. For $w>0$ we have $\beta(x,w)>0$ and $w=\frac12 \sigma(e^x, \beta(x,w))^2 \beta(x,w)\ge \frac{\sigma^2_0}{2} \beta(x,w)$ and so  $\frac{\beta(x,w)}{w}\le \frac{2}{\sigma^2_0}$. If $w<0$ then $\beta(x,w)<0$ and we can proceed similarly as before. 

Differentiating the equality $w=\frac12 \sigma(e^x, \beta(x,w))^2 \beta(x,w)\ge \frac{\sigma^2_0}{2} \beta(x,w)$ with respect to $w>0$ yields:
\[
 1=\frac12 \sigma^2(e^x, \beta(x,w)) \beta^\prime_w(x,w)
 + \partial_H \left(\frac12 \sigma(e^x, H)^2  \right) H \ge \frac12 \sigma^2_0 \beta^\prime_w(x,w)
\]
for $H=\beta(x,w)>0$ and the proof of the second statement of Lemma follows.

\end{proof}

\medskip
The key step how to solve the perpetual American put option problem (\ref{pperpetual})--(\ref{pperpetual-bc}) consists in introduction of the following variable:
\begin{equation}
 W(x) = \frac{r}{S} \left( V(S) - S\partial_S V(S) \right)\quad \hbox{where}\ \ S=e^x.
\end{equation}

\begin{lemma}\label{lemma-W}
Let $x^0\in\R$ be given. The function $V(S)$ is a solution to equation (\ref{pperpetual}) for $S>\varrho=e^{x_0}$ satisfying the boundary condition:
\[
 V(S) - S \partial_S V(S) =E, \quad \hbox{at}\ \ S=\varrho,
\]
iff and only if the transformed function $W(x)$ is a solution to the initial value problem for the ODE:
\begin{eqnarray}
 && \partial_x W(x) = -W(x) -r\beta(x, W(x)), \quad x>x_0, \label{Eq-W}
\\
&&W(x_0) = r E e^{-x_0}. \nonumber 
\end{eqnarray}

\end{lemma}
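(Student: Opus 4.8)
The plan is to show that, in the logarithmic variable $S=e^x$, the change of unknown that defines $W$ collapses the second-order equation (\ref{pperpetual}) into the first-order equation (\ref{Eq-W}), and that every implication used in the argument is in fact an equivalence, which delivers the ``if and only if''. I would begin with a differential identity valid for \emph{every} $C^2$ function $V$: writing $W(x)=\frac{r}{S}(V(S)-S\partial_SV(S))=re^{-x}V(e^x)-rV'(e^x)$ and using $dS/dx=S$, the chain rule gives
\[
\partial_x W(x)=-re^{-x}V(e^x)+rV'(e^x)-re^{x}V''(e^x)=-W(x)-rH(x),\qquad H:=S\partial_S^2V,
\]
which uses nothing but differentiation.

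The second ingredient is the rewriting of the equation itself. By (\ref{pperpetual}) one has $\tfrac12\sigma(S,H)^2S^2\partial_S^2V=r(V-S\partial_SV)=SW$, and since $S^2\partial_S^2V=SH$ this is equivalent to $\tfrac12\sigma(S,H)^2H=W$, which, by the very definition (\ref{Y0}) of the inverse function $\beta$, is equivalent to $H=\beta(x,W)$. Here I would use that, after extending $\sigma(S,\cdot)$ by $\sigma(S,0)$ to $H\le0$, the map $H\mapsto\tfrac12\sigma(S,H)^2H$ is strictly increasing on all of $\R$, so that $\beta(x,\cdot)\colon\R\to\R$ is a genuine bijection and this last equivalence holds for data of either sign. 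Inserting $H=\beta(x,W)$ into the identity above gives exactly $\partial_xW=-W-r\beta(x,W)$, and evaluating $W$ at $x_0$ shows that the boundary condition $V(\varrho)-\varrho\partial_SV(\varrho)=E$ at $S=\varrho=e^{x_0}$ is equivalent to $W(x_0)=rEe^{-x_0}$. Reading this chain of equivalences from left to right establishes the ``only if'' part.

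For the converse I would reconstruct $V$ from a given solution $W$ of (\ref{Eq-W}). From $\partial_S(V(S)/S)=-\tfrac{1}{S^2}(V-S\partial_SV)=-W(\ln S)/(rS)$ one is led to put $V(S)=S\bigl(c-\tfrac1r\int_{x_0}^{\ln S}W(s)\,ds\bigr)$ with an arbitrary constant $c$. A short computation then gives $V-S\partial_SV=\tfrac{S}{r}W(\ln S)$, so the boundary condition holds automatically by virtue of $W(x_0)=rEe^{-x_0}$; moreover $S\partial_S^2V$ equals $-\tfrac1r(W+\partial_xW)$ evaluated at $\ln S$, which by (\ref{Eq-W}) is $\beta(\ln S,W(\ln S))=:H$, and then (\ref{Y0}) turns $\tfrac12\sigma(S,H)^2S^2\partial_S^2V$ into $SW=r(V-S\partial_SV)$, i.e.\ (\ref{pperpetual}) holds. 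The constant $c$ is free, which merely reflects that $V$ and $V+cS$ share the same transform $W$ and both solve the problem; it is pinned down afterwards by the far-field condition $V(+\infty)=0$.

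I expect the only delicate steps to be bookkeeping ones: carrying out the chain-rule differentiation of $W$ without a slip, and making sure that the construction behind (\ref{Y0}) genuinely renders $\beta(x,\cdot)$ a bijection of $\R$, so that the pivotal equivalence $\tfrac12\sigma(S,H)^2H=W\Longleftrightarrow H=\beta(x,W)$ is available regardless of the sign of $H$. Beyond this there is no analytic obstacle.
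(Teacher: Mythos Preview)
Your argument is correct and follows essentially the same route as the paper: compute $\partial_x W=-W-rH$ by the chain rule, identify $H=\beta(x,W)$ via (\ref{Y0}) exactly when (\ref{pperpetual}) holds, and match the boundary condition. If anything you are more thorough than the paper, which leaves the converse implication implicit, whereas you spell out the reconstruction $V(S)=S\bigl(c-\tfrac1r\int_{x_0}^{\ln S}W\bigr)$ and note the free constant $c$ absorbed by the far-field condition.
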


\begin{proof}
As $\partial_x = S\partial_S$ we obtain
\begin{eqnarray*}
 \partial_x W(x) &=&  r S\partial_S ( S^{-1} V(S) - \partial_S V(S))
= r S S^{-1} \partial_S V(S)  -r S^{-1} V(S) - r S\partial^2_S V(S)
 \\
&=& - W(x) -r S\partial^2_S V(S) = -W(x) - r\beta(x,W(x)),
\end{eqnarray*}
because $\beta(x,W(x))=H\equiv S\partial^2_S V(S)$ if and only if $\frac12 \sigma(S, H)^2 H  = W(x)$ and $V$ solves (\ref{pperpetual}), i.e. 
\[
\frac12 \sigma(S, H)^2 H + \frac{r}{S} \left( S\partial_S V(S) - V(S) \right)=0.
\]
Finally, $W(x_0) = \frac{r}{S} \left( V(S) - S\partial_S V(S) \right) = r E e^{-x_0}$ where 
$S=\varrho=e^{x_0}$, as claimed.
\end{proof}

\medskip
Notice the equivalence of conditions:
\begin{equation}
  V(S) - S \partial_S V(S) =E \ \hbox{and}\ V(S)=E-S\quad \Longleftrightarrow\quad 
 \partial_S V(S) =-1 \ \hbox{and}\ V(S)=E-S.
 \label{BC-equiv}
\end{equation}

\medskip
Concerning the solution $W$ of the ODE (\ref{Eq-W}) we have the following auxiliary result:

\begin{lemma}\label{lemma-W-prop}
Assume $x^0\in\R$. Let $W=W_{x_0}(x)$ be the unique solution to the ODE (\ref{Eq-W}) for $x\in\R$ satisfying the boundary condition $W(x_0) = r E e^{-x_0}$ at the initial point $x_0$. Then 
\begin{enumerate}
 \item $W_{x_0}(x)>0$ for any $x\in\R$,
 \item the function $x_0\mapsto W_{x_0}(x)$ is increasing in the $x_0$ variable for any $x\in\R$,
 \item if the volatility function depends on $H=S\partial^2_S V$ only, i.e. $\sigma=\sigma(H)$, then 
\[
W_{x_0}(x) =F^{-1} (x_0-x)\quad \hbox{where}\quad  
F(W) = \int_{W_0}^W \frac{1}{w+r\beta(w)}dw, \qquad W_0=W(x_0)=rE e^{-x_0}.
\]

\end{enumerate}

\end{lemma}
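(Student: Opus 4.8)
The plan is to treat the three assertions in turn, using the ODE~(\ref{Eq-W}) together with the monotonicity properties of $\beta$ collected in Lemma~\ref{lemma-beta}. First I would establish global existence and uniqueness of $W_{x_0}$ on all of $\R$: the right-hand side $G(x,W)=-W-r\beta(x,W)$ is $C^1$ in $W$ by Assumption~1 (so local existence and uniqueness hold), and by part~1 of Lemma~\ref{lemma-beta} we have the linear bound $|r\beta(x,W)|\le \tfrac{2r}{\sigma_0^2}|W|$, hence $|G(x,W)|\le (1+\tfrac{2r}{\sigma_0^2})|W|$; a Gronwall argument then prevents blow-up in finite time and gives a solution on $\R$.

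For assertion~1, I would argue that $W_{x_0}$ cannot vanish. Since $\beta(x,0)=0$, the constant function $W\equiv 0$ is itself a solution of~(\ref{Eq-W}); because $W_{x_0}(x_0)=rEe^{-x_0}>0$ (assuming $E>0$, which is implicit), uniqueness forbids $W_{x_0}$ from ever touching the zero solution, so $W_{x_0}(x)>0$ for all $x\in\R$. For assertion~2, I would use the comparison principle for scalar ODEs: if $x_0^{(1)}<x_0^{(2)}$, then at any common point the two solutions cannot cross (again by uniqueness through that point), so the ordering is determined by the data; one then checks that a larger $x_0$ forces $W_{x_0}(x)$ to lie above $W_{x_0'}(x)$. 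A clean way to see this is to solve the ODE backward and forward from the respective initial points and note that the flow of an autonomous-in-$W$ (but $x$-dependent) equation with unique trajectories is order-preserving in the fiber. I expect the only subtlety here is bookkeeping the direction of the inequality correctly, since the initial condition $rEe^{-x_0}$ is \emph{decreasing} in $x_0$, so the statement is really about how shifting the initial \emph{location} rightward moves the whole trajectory up; this should follow by comparing $W_{x_0^{(2)}}$ with the translate of $W_{x_0^{(1)}}$ and using that $\partial_x W<0$ along positive solutions (which is immediate from~(\ref{Eq-W}) since $W>0$ and $r\beta(x,W)\ge 0$).

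For assertion~3, when $\sigma=\sigma(H)$ the function $\beta$ no longer depends on $x$, so~(\ref{Eq-W}) becomes the autonomous separable equation $\partial_x W=-(W+r\beta(W))$. Since $W>0$ implies $W+r\beta(W)>0$, I would separate variables and integrate from $x_0$ to $x$, writing $\int_{W_0}^{W(x)}\frac{dw}{w+r\beta(w)}=-(x-x_0)=x_0-x$, i.e. $F(W(x))=x_0-x$ with $F$ as defined in the statement; $F$ is strictly increasing on $(0,\infty)$ (positive integrand), hence invertible, giving $W_{x_0}(x)=F^{-1}(x_0-x)$. The main thing to verify is that $F$ is well defined — i.e. that the integral does not diverge on the relevant range — but on any compact subinterval of $(0,\infty)$ this is clear, and the formula is to be read on the maximal interval where $F^{-1}$ makes sense, which matches the global existence established at the outset. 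The chief obstacle in the whole lemma is assertion~2: getting the monotonicity in $x_0$ stated with the correct sign, since one must carefully disentangle the effect of moving the initial point from the effect of changing the initial value; everything else is a direct application of uniqueness, the sign of $\beta$, and separation of variables.
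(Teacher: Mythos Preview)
Your treatments of parts~1 and~3 are correct. For part~1 you invoke uniqueness against the zero solution, which is cleaner than the paper's route: the paper instead integrates the differential inequality $\partial_x\ln|W|\ge -(1+\gamma)$ (with $\gamma=2r/\sigma_0^2$, from $\beta(x,w)/w\le 2/\sigma_0^2$) to obtain the quantitative bound $|W(x)|\ge |W(x_0)|e^{-(1+\gamma)(x-x_0)}$. That bound is reused downstream in Lemma~\ref{lemma-uniqueroot}, so even with your argument it is worth recording. Part~3 is the same separation-of-variables computation as in the paper.

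For part~2 your plan has a real gap. You correctly note that two solutions of~(\ref{Eq-W}) cannot cross, so it would suffice to compare them at one common abscissa; but the only estimate you actually write down is $\partial_x W<0$, and that is insufficient. For $x_0^{(1)}<x_0^{(2)}$ it yields merely $W_{x_0^{(1)}}(x_0^{(2)})<rEe^{-x_0^{(1)}}$, which says nothing about the relation to $W_{x_0^{(2)}}(x_0^{(2)})=rEe^{-x_0^{(2)}}$. The ``translate'' heuristic does not help either, since the equation is not autonomous in general. What actually closes your comparison argument is the sharper inequality $\partial_x W=-W-r\beta(x,W)<-W$ (strict because $\beta(x,W)>0$ for $W>0$), which integrates to $W_{x_0^{(1)}}(x_0^{(2)})<rEe^{-x_0^{(1)}}e^{-(x_0^{(2)}-x_0^{(1)})}=rEe^{-x_0^{(2)}}=W_{x_0^{(2)}}(x_0^{(2)})$; then non-crossing finishes. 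The paper takes a different and more systematic route: it sets $y(x)=\partial_{x_0}W_{x_0}(x)$, derives the linear variational ODE $\partial_x y=-(1+r\beta'_w(x,W_{x_0}(x)))\,y$ with initial value $y(x_0)=r\beta(x_0,rEe^{-x_0})>0$, and uses $\beta'_w\le 2/\sigma_0^2$ from Lemma~\ref{lemma-beta} exactly as in part~1 to conclude $y(x)>0$ for all $x$. This also yields the explicit lower bound $y(x)\ge y(x_0)e^{-(1+\gamma)(x-x_0)}$, which is invoked again in the proof of Lemma~\ref{lemma-uniqueroot}.
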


\begin{proof}
According to Lemma~\ref{lemma-beta} we have $\beta(x,w)/w\le 2/\sigma^2_0$ for any $x\in\R$ and $w\not=0$. Hence
\[
\partial_x |\ln(W(x)| = -\left( 1+ r\frac{\beta(x,W(x))}{W(x)}\right) \ge -(1+\gamma) 
\]
where $\gamma=2r/\sigma^2_0$. Therefore
\[
 |W(x)| \ge |W(x_0)| e^{-(1+\gamma)(x-x_0)}>0,
\]
and this is why the function $W(x)$ does not change the sign. As $W(x_0)=rEe^{-x_0}>0$ we have $W_{x_0}(x) >0$ as well. 

The solution $W_{x_0}(x)$ to the ODE (\ref{Eq-W}) can be expressed in the form
\[
 W_{x_0}(x)=W_{x_0}(x_0) - \int_{x_0}^x (W_{x_0}(\xi) + r\beta(\xi, W_{x_0}(\xi)))d\xi.
= r E e^{-x_0} - \int_{x_0}^x \left(W_{x_0}(\xi) + r\beta(\xi, W_{x_0}(\xi))\right)d\xi.
\]
Let us introduce the auxiliary function 
\[
y(x) = \partial_{x_0} W_{x_0}(x).
\]
Then 
\begin{eqnarray*}
y(x) &=& -r E e^{-x_0} + W_{x_0}(x_0) + r\beta(x_0, W_{x_0}(x_0))
 -  \int_{x_0}^x \left(1 + r\beta^\prime_w (\xi, W_{x_0}(\xi))\right) y(\xi)d\xi
\\
&=& r\beta(x_0, W_{x_0}(x_0)) - \int_{x_0}^x \left(1 + r\beta^\prime_w (\xi, W_{x_0}(\xi)) \right)y(\xi)d\xi.
\end{eqnarray*}
Hence $y$ is a solution to the ODE:
\begin{eqnarray}
&& \partial_x y(x) = 
- \left(1 + r\beta^\prime_w (x, W_{x_0}(x)) \right)y(x), \quad x\in\R, \label{Eq-y}
\\
&&y(x_0) = r \beta(x_0, r E e^{-x_0} )>0. \nonumber 
\end{eqnarray}
With regard to Lemma~\ref{lemma-beta} we have $\beta^\prime_w (x, W_{x_0}(x))\le 2/\sigma^2_0$. Therefore the function $y$ is a solution to the differential inequality:
\[
\partial_x y(x) \ge 
- (1 + \gamma) y(x), \quad x\in\R, 
\]
where $\gamma=2r/\sigma^2_0$. As a consequence we obtain
\begin{equation}
 |y(x)|\ge |y(x_0)| e^{-(1+\gamma)(x-x_0)} >0
 \label{y-ineq}
\end{equation}
and this is why the function $y(x)$ does not change the sign. Therefore $\partial_{x_0} W_{x_0}(x) = y(x) >0$ and the proof of the statement 2) follows. 

Finally, if $\sigma=\sigma(H)$ we have $\beta=\beta(w)$ and so
\[
 \partial_x F(W(x)) = \frac{1}{W(x)+r\beta(W(x))} \partial_x W(x) = -1.
\]
Hence $F(W(x))= F(W(x_0)) - (x-x_0) = x_0-x$ and the statement 3) follows. 

\end{proof}

\begin{lemma}\label{lemma-uniqueroot}
Under Assumption 1, there exists the unique root $x_0\in\R$ of the implicit equation
\begin{equation}
\int_{x_0}^\infty \beta(x, W_{x_0}(x))dx =1.
\label{x0-root}
\end{equation}

\end{lemma}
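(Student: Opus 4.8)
The plan is to define the function $\Phi(x_0) := \int_{x_0}^\infty \beta(x,W_{x_0}(x))\,dx$ and to show that $\Phi$ is well defined, continuous, strictly monotone, and attains values on both sides of $1$; the claim then follows from the intermediate value theorem together with strict monotonicity for uniqueness.

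First I would check that the integral converges for every $x_0\in\R$. By Lemma~\ref{lemma-beta}, part~1, we have $0<\beta(x,W_{x_0}(x))\le \tfrac{2}{\sigma_0^2}W_{x_0}(x)$ since $W_{x_0}(x)>0$ by Lemma~\ref{lemma-W-prop}, part~1, so it suffices to bound $W_{x_0}(x)$ for $x\to+\infty$. From the ODE (\ref{Eq-W}) and nonnegativity of $\beta$ on positive arguments we get $\partial_x W(x)\le -W(x)$, hence $W_{x_0}(x)\le rEe^{-x_0}e^{-(x-x_0)}$, which decays exponentially; this makes $\Phi(x_0)$ finite. Continuity of $x_0\mapsto\Phi(x_0)$ follows from continuous dependence of solutions of (\ref{Eq-W}) on the initial data together with the exponential domination just established, which legitimizes passing the limit under the integral sign (dominated convergence on a suitable neighbourhood of the given $x_0$).

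Next I would establish strict monotonicity of $\Phi$. By Lemma~\ref{lemma-W-prop}, part~2, the map $x_0\mapsto W_{x_0}(x)$ is strictly increasing for each fixed $x$, and since $\beta(x,\cdot)$ is strictly increasing the integrand $\beta(x,W_{x_0}(x))$ is strictly increasing in $x_0$ for each fixed $x$. The shrinking of the domain of integration as $x_0$ increases works in the same direction. Combining these two effects, $\Phi$ is strictly increasing in $x_0$; a little care is needed to turn pointwise strict monotonicity of the integrand into strict monotonicity of the integral, but this is routine once one notes the integrand is continuous and positive. This strict monotonicity gives uniqueness of the root.

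Finally I would verify the limiting behaviour $\Phi(x_0)\to 0$ as $x_0\to-\infty$ and $\Phi(x_0)\to\infty$ as $x_0\to+\infty$. For $x_0\to+\infty$: note $\beta(x,W_{x_0}(x))\ge \beta(x_0,W_{x_0}(x_0))e^{-(1+\gamma)(x-x_0)}$ fails to be immediate, so instead I would use the explicit lower bound near the initial point. Since $W_{x_0}(x_0)=rEe^{-x_0}$, for $x_0$ very negative $W_{x_0}(x_0)\to\infty$, hence $\beta(x_0,W_{x_0}(x_0))\to\infty$, and using the bound $W_{x_0}(x)\ge W_{x_0}(x_0)e^{-(1+\gamma)(x-x_0)}$ from the proof of Lemma~\ref{lemma-W-prop} together with monotonicity of $\beta$ in its second argument, one integrates a lower bound that diverges as $x_0\to-\infty$; symmetrically, for $x_0\to+\infty$ we have $W_{x_0}(x_0)=rEe^{-x_0}\to 0$, so $\beta(x,W_{x_0}(x))\le \tfrac{2}{\sigma_0^2}rEe^{-x_0}e^{-(x-x_0)}$ gives $\Phi(x_0)\le \tfrac{2rE}{\sigma_0^2}e^{-x_0}\to 0$.

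The main obstacle I anticipate is making the divergence assertion $\Phi(x_0)\to\infty$ (as $x_0\to-\infty$) fully rigorous: it requires a uniform-in-$x_0$ lower bound on $W_{x_0}(x)$ over an interval of fixed length (say $[x_0,x_0+1]$), combined with a quantitative lower bound on how fast $\beta(x,w)\to\infty$ as $w\to\infty$ — which, under Assumption~1 alone, is only guaranteed by $\beta(x,w)>0$ and monotonicity, not by any explicit rate. One clean way around this is to observe that $\int_{x_0}^\infty\beta(x,W_{x_0}(x))\,dx$ can, via the change of variables $w=W_{x_0}(x)$ and equation (\ref{Eq-W}), be rewritten as $\int_0^{rEe^{-x_0}}\frac{\beta(\cdot,w)}{w+r\beta(\cdot,w)}\,dw$ (exactly as in Lemma~\ref{lemma-W-prop}, part~3, in the autonomous case), whence the integrand lies in $[0,1]$ and one reads off both the monotone growth and the divergence as $rEe^{-x_0}\to\infty$ more transparently; in the non-autonomous case the same idea applies with minor bookkeeping since $\beta$ depends on $x$ as well.
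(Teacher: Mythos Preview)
Your monotonicity argument is where the proposal breaks down. You assert that increasing $x_0$ both increases the integrand (via Lemma~\ref{lemma-W-prop}, part~2, and the monotonicity of $\beta$) and shrinks the interval of integration, and that these effects ``work in the same direction.'' They do not: the integrand indeed increases pointwise in $x_0$, but removing the segment $[x_0,x_0')$ with a positive integrand \emph{decreases} the integral. The two contributions have opposite signs, so your argument does not decide the sign of $\Phi'(x_0)$. In fact the boundary effect wins and $\Phi$ is strictly \emph{decreasing}: this is consistent with your own last paragraph, where your detailed estimates give $\Phi(x_0)\to\infty$ as $x_0\to-\infty$ and $\Phi(x_0)\to 0$ as $x_0\to+\infty$, contradicting the increasing claim (and the swapped limits you announced just above).

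The paper resolves the competition by differentiating $\phi(x_0):=\Phi(x_0)$ and then eliminating $\beta'_w$ using the linear ODE satisfied by $y(x)=\partial_{x_0}W_{x_0}(x)$, namely $\partial_x y=-(1+r\beta'_w)y$. Substituting $r\beta'_w\,y=-\partial_x y-y$ into $\phi'(x_0)=-\beta(x_0,W_{x_0}(x_0))+\int_{x_0}^\infty\beta'_w\,y\,dx$ makes the boundary term cancel exactly against $y(x_0)/r$ and leaves $\phi'(x_0)=-\tfrac{1}{r}y(\infty)-\tfrac{1}{r}\int_{x_0}^\infty y\,dx<0$. This is the missing idea in your approach. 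The same computation, combined with the lower bound $y(x)\ge y(x_0)e^{-(1+\gamma)(x-x_0)}$, yields $\phi'(x_0)\le -\beta(x_0,rEe^{-x_0})/(1+\gamma)$, from which the paper deduces $\phi(-\infty)=+\infty$ directly, avoiding the quantitative lower-bound issue on $\beta$ that you flagged as an obstacle.
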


\begin{proof}
Denote $\phi(x_0) =  \int_{x_0}^\infty \beta(x, W_{x_0}(x))dx$. Then $\phi(\infty) =0$ and
\[
\phi^\prime(x_0) = -   \beta(x_0, W_{x_0}(x_0)) + \int_{x_0}^\infty \beta^\prime_w(x, W_{x_0}(x)) y(x)dx
\]
where $y(x)= \partial_{x_0} W_{x_0}(x)$ is the solution to (\ref{Eq-y}). That is 
$\partial_x y(x) =  - \left(1 + r\beta^\prime_w (x, W_{x_0}(x)) \right)y(x)$ and 
$y(x_0) =  r \beta(x_0, W_{x_0}(x_0) ) = r \beta(x_0, r E e^{-x_0} )$. Therefore
\[
\phi^\prime(x_0) = -   \beta(x_0, W_{x_0}(x_0)) 
-\frac{1}{r}\int_{x_0}^\infty \partial_x y(x) + y(x)dx
= -\frac{1}{r} y(\infty) -\frac{1}{r}\int_{x_0}^\infty y(x)dx
\le  -\frac{1}{r}\int_{x_0}^\infty y(x)dx.
\]
As $y(x)= \partial_{x_0} W_{x_0}(x) \ge y(x_0) e^{-(1+\gamma)(x-x_0)}$ we have
\[
\phi^\prime(x_0) \le -\frac{1}{r} \frac{y(x_0)}{1+\gamma} 
= - \frac{\beta(x_0, W_{x_0}(x_0))}{1+\gamma}.
\]
It means that the function $\phi$ is strictly decreasing. Since
\[
\frac{1}{2}\sigma(e^{x_0}, \beta(x_0, W_{x_0}(x_0)))^2 \beta(x_0, W_{x_0}(x_0)) =  W_{x_0}(x_0)
= r E e^{-x_0} \to +\infty \quad \hbox{as}\ x_0\to-\infty,
\]
we have $\lim_{x_0\to-\infty} \beta(x_0, W_{x_0}(x_0)) = \infty$ and therefore $\lim_{x_0\to-\infty} \phi^\prime(x_0) = -\infty$. Therefore $\phi(-\infty)=\infty$. In summary, there exists the unique root $x_0$ of the equation $\phi(x_0)=1$, as claimed.

\end{proof}

Now we are in a position to state our main result on unique solvability of
the perpetual American put option problem (\ref{pperpetual})--(\ref{pperpetual-bc}).

\begin{theorem}\label{theorem-main}
Assume the volatility function $\sigma$ satisfies Assumption 1. Then there exists the unique solution $(V(.), \varrho)$ to the perpetual American put option problem (\ref{pperpetual})--(\ref{pperpetual-bc}). The function $V(S)$ is given by
\[
 V(S) = \frac{S}{r}\int_{\ln S}^\infty W_{x_0}(x) dx, \quad\hbox{for}\ S\ge \varrho=e^{x_0},
\]
where $W_{x_0}(x)$ is the solution to the ODE (\ref{Eq-W}) and $x_0$ is the unique root of the implicit equation (\ref{x0-root}).
\end{theorem}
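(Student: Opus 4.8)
The plan is to obtain both the existence and the uniqueness part by pushing the free boundary problem through the substitution $W(x)=\frac{r}{S}\bigl(V(S)-S\partial_S V(S)\bigr)$, $S=e^x$, and then reading everything off from Lemmas~\ref{lemma-W}--\ref{lemma-uniqueroot}.

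For \emph{existence}, I would let $x_0$ be the root of (\ref{x0-root}) supplied by Lemma~\ref{lemma-uniqueroot}, set $\varrho=e^{x_0}$, take $W_{x_0}$ to be the solution of the initial value problem (\ref{Eq-W}), and define $V$ by the stated integral formula. First one checks that $V$ is well defined and of class $C^2$: since $\beta(x,W_{x_0}(x))\ge 0$, (\ref{Eq-W}) gives $\partial_x W_{x_0}\le -W_{x_0}$, hence $0<W_{x_0}(x)\le rE\,e^{-x_0}e^{-(x-x_0)}$, so $\int_{\ln S}^\infty W_{x_0}$ converges (in fact $0<V(S)\le E$), and $\sigma\in C^1$ makes $\beta$, hence $W_{x_0}$, of class $C^1$, so the formula may be differentiated twice. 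Differentiating once produces the identity $V(S)-S\partial_S V(S)=\frac{S}{r}W_{x_0}(\ln S)$, i.e. the transformed function attached to this $V$ is exactly $W_{x_0}$; hence, by Lemma~\ref{lemma-W} used in the direction $W\Rightarrow V$, the function $V$ solves (\ref{pperpetual}) on $S>\varrho$ and $V(\varrho)-\varrho\partial_S V(\varrho)=E$. To handle the boundary data (\ref{pperpetual-bc}), integrate (\ref{Eq-W}) over $(x_0,\infty)$ and use $W_{x_0}(+\infty)=0$ and $W_{x_0}(x_0)=rEe^{-x_0}$ to get
\[
 \int_{x_0}^\infty W_{x_0}(x)\,dx = rE\,e^{-x_0}-r\int_{x_0}^\infty \beta(x,W_{x_0}(x))\,dx = r\bigl(E\,e^{-x_0}-1\bigr),
\]
where the last equality \emph{is} (\ref{x0-root}); therefore $V(\varrho)=\frac{\varrho}{r}\int_{x_0}^\infty W_{x_0}=E-\varrho$, and the equivalence (\ref{BC-equiv}) turns the already-established $V(\varrho)-\varrho\partial_S V(\varrho)=E$ into $\partial_S V(\varrho)=-1$.

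What then remains is the condition at infinity $V(+\infty)=0$, and this is the step I expect to be the main obstacle: the crude bound $V\le E$ is far too weak, and one must sharpen the decay rate of $W_{x_0}$. The idea I would pursue is to use that $\beta(x,\cdot)$ is \emph{strictly} increasing through the origin: for $w$ in the bounded range $[0,W_{x_0}(x_0)]$, Assumption~1 gives a lower estimate $\beta(x,w)\ge\kappa(x)\,w$ with $\kappa(x)=2/\sigma(e^x,2W_{x_0}(x_0)/\sigma_0^2)^2>0$; inserting this into (\ref{Eq-W}) forces $W_{x_0}$ to decay strictly faster than $e^{-x}$ (at least when $\kappa$ stays bounded away from $0$, which covers the models considered here), whence $V(S)=\frac{S}{r}\int_{\ln S}^\infty W_{x_0}\to 0$. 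Alternatively, one may argue by contradiction: if $V(+\infty)=L>0$, then (\ref{pperpetual}) gives $S\partial_S^2V\to 0$ and $S^2\partial_S^2 V\sim 2rL/\sigma(S,0)^2$, so that $\int_\varrho^\infty S\partial_S^2V\,dS$ (which equals $E-L$ by an integration by parts) diverges for $\sigma(\cdot,0)$ bounded, a contradiction.

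For \emph{uniqueness}, I would run the same bridge in reverse. Given any solution $(V,\varrho)$ of (\ref{pperpetual})--(\ref{pperpetual-bc}), put $x_0=\ln\varrho$; by (\ref{BC-equiv}) the data imply $V(\varrho)-\varrho\partial_S V(\varrho)=E$, so Lemma~\ref{lemma-W} makes $W(x)=\frac{r}{S}(V-S\partial_S V)$ a solution of (\ref{Eq-W}) with $W(x_0)=rEe^{-x_0}$, whence $W\equiv W_{x_0}$ by uniqueness for the ODE. Since $\partial_x\!\bigl(e^{-x}V(e^x)\bigr)=-\tfrac1r W(x)$ and $e^{-x}V(e^x)\to 0$ as $x\to\infty$ (from $V(+\infty)=0$), integrating from $\ln S$ to $\infty$ recovers the asserted formula for $V$. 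Finally, the leftover condition $V(\varrho)=E-\varrho$ becomes $\int_{x_0}^\infty\beta(x,W_{x_0}(x))\,dx=1$ exactly as in the existence part, so $x_0$ must be the root of (\ref{x0-root}), which Lemma~\ref{lemma-uniqueroot} shows is unique; hence $\varrho$, $W_{x_0}$ and $V$ are determined uniquely. Everything except the asymptotic argument securing $V(+\infty)=0$ is essentially bookkeeping on top of the lemmas.
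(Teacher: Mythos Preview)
Your approach is essentially the paper's: define $V$ by the integral formula, differentiate to recover the PDE and the data at $S=\varrho$, and use (\ref{x0-root}) to obtain $V(\varrho)=E-\varrho$ (the paper recomputes the derivatives directly rather than invoking Lemma~\ref{lemma-W}, but that is the same calculation). You are in fact more thorough than the paper, which neither verifies $V(+\infty)=0$ nor spells out the reverse uniqueness argument; your caveats on the $V(+\infty)=0$ step are honest, since Assumption~1 alone does not bound $\sigma(S,\cdot)$ from above in $S$.
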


\begin{proof}
Differentiating the above expression for $V(S)$ we obtain
\begin{eqnarray*}
 \partial_S V(S) &=&  \frac{1}{r}\int_{\ln S}^\infty W_{x_0}(x) dx
 -\frac{1}{r} W_{x_0}(\ln S)
\\
 S\partial^2_S V(S) &=&  - \frac{1}{r}\left( W_{x_0}(x) + \partial_x W_{x_0}(x)\right) = \beta(x,  W_{x_0}(x)),
\end{eqnarray*}
where $x=\ln S$. Hence
\[
\frac{1}{2}\sigma(S, S\partial^2_S V)^2 S^2 \partial^2_S V +r S\partial_S V -r V 
= S \left(\frac{1}{2}\sigma(e^x, \beta(x,  W_{x_0}(x)))^2 \beta(x,  W_{x_0}(x)) -  W_{x_0}(x) \right) =0,
\]
i.e. $V(S)$ is the solution to (\ref{pperpetual}) for $S>\varrho= e^{x_0}$. 

Furthermore,
\[
\left[V(S) - S\partial_S V(S)\right]_{S=\varrho} = 
V(\varrho) -   \frac{\varrho}{r}\int_{\ln \varrho}^\infty W_{x_0}(x) dx
 + \frac{\varrho}{r} W_{x_0}(\ln \varrho)
=E\varrho e^{-\ln\varrho}=E,
\]
and,
\begin{eqnarray*}
V(\varrho) &=& \frac{\varrho}{r}\int_{\ln \varrho}^\infty W_{x_0}(x) dx
=  \frac{\varrho}{r}\int_{\ln \varrho}^\infty -\partial_x W_{x_0}(x) - r\beta(x,W_{x_0}(x))dx
\\
&=&  \frac{\varrho}{r} W_{x_0}(\ln \varrho) - \varrho \int_{\ln \varrho}^\infty \beta(x,W_{x_0}(x))dx = E - \varrho
\end{eqnarray*}
because $x_0$ is the unique solution to (\ref{x0-root}). With regard to the equivalence (\ref{BC-equiv}) we have $\partial_SV(S) = -1$ at $S=\varrho$. In summary, $(V(.),\varrho)$ is the unique solution to the perpetual American put option problem (\ref{pperpetual})--(\ref{pperpetual-bc}).
\end{proof}

\medskip

\begin{remark}
In the case the volatility function depends on $H=S\partial^2_SV$ only, i.e. $\sigma=\sigma(H)$, then equation (\ref{x0-root}) can be simplified by introducing the change of variables $w=W_{x_0}(x)$.  Indeed, $\beta=\beta(w)$ and $dw=\partial_x W_{x_0}(x) dx = -(W_{x_0}(x) + r\beta (W_{x_0}(x))) dx$. Therefore
\[
\int_{x_0}^\infty \beta(W_{x_0}(x))dx = -\int_{W_{x_0}(x_0)}^0 \frac{\beta(w)}{w+r\beta(w)} dw = \int_0^{\frac{rE}{\varrho}} \frac{\beta(w)}{w+r\beta(w)} dw.
\]
Equation   (\ref{x0-root}) can be rewritten in the following form
\begin{equation}
\int_0^{\frac{rE}{\varrho}} \frac{\beta(w)}{w+r\beta(w)} dw =1.
\label{x0-root-simplified}
\end{equation}
This is the condition for the free boundary position $\varrho$ recently derived by the authors in \cite{journal23}.

\end{remark}

\section{The Merton explicit solution, sub and super solutions}

In this section we recall recent results due to the authors \cite{journal23} dealing with comparison of the solution $(V(.),\varrho)$ to the perpetual American put option problem (\ref{pperpetual})--(\ref{pperpetual-bc}) for the case when the volatility function depends on $H=S\partial^2_SV$ only, i.e. $\sigma=\sigma(H)$. 

Suppose that the volatility $\sigma\equiv \sigma_0$ is constant, then for the function $V(S)$ and the limiting early exercise boundary position $\varrho$ the free boundary value problem  (\ref{pperpetual})--(\ref{pperpetual-bc}) has the explicit solution presented by Merton (c.f.  \cite{monograph1},\cite{monograph2}), which has the closed form:
\begin{equation}
V_\gamma(S) = \left\{
\begin{array}{ll}
 E-S,  &  0<S \le \varrho_\gamma,\\ 
 \frac{E}{1+\gamma} \left(\frac{S}{\varrho_\gamma}\right)^{-\gamma}, & S> \varrho_\gamma,
\end{array} 
\right.
\end{equation}
where 
\begin{equation}
\varrho_\gamma=E \frac{\gamma}{1+\gamma}, \qquad \gamma = \frac{2r}{\sigma_0^2}.
\label{gamma}
\end{equation}

Our next goal is to establish sub-- and super--solutions to the perpetual American put option pricing problem.  Let $\gamma>0$ is a positive constant and denote by $V_\gamma$ the explicit Merton solution defined before. It is clear that the pair $(V_\gamma(\cdot), \varrho_\gamma)$ is the explicit Merton solution with constant volatility $\sigma_0^2=2r/\gamma$.

Then, for the transformed function $W_\gamma(x)$ we have
\[
 W_\gamma(x) = r E \varrho_\gamma^\gamma e^{-(1+\gamma)x},\quad \hbox{for}\ 
 x=\ln S > x_{0\gamma} = \ln \varrho_\gamma.
\]
Furthermore $W_\gamma$ is a solution to the ODE:
\begin{equation}\label{Y9}
\partial_x W_\gamma + W_\gamma + \gamma W_\gamma =0.
\end{equation}
Applying the equation (\ref{Y9}) we can construct  a super-solution $W_{\gamma^+}$ and a sub-solution $W_{\gamma^-}$ to the solution $W$ of the equation:
\begin{equation*} 
\partial_x W = - W- r \beta(W)
\end{equation*}
using the Merton solution $W_\gamma$. Here $\gamma^+$ is the unique root of the equation
\begin{equation*}
\gamma^+ \sigma(1+\gamma^+)^2 = 2r  
\end{equation*}
and $\gamma^-$ satisfies
\begin{equation*}
 \gamma^- \sigma(0)^2 = 2r.
\end{equation*}

As a consequence, the following inequalities hold. For more details, we refer to \cite{journal23}.
\begin{equation}\label{Y11}
\left\{
\begin{array}{ll}
 \partial_x W_{\gamma^+}(x) \ge - W_{\gamma^+}(x) - r \beta(W_{\gamma^+}(x)),  &  \hbox{for}\ x> x_{0\gamma^+}=\ln\varrho_{\gamma^+},\\ 
\\
\partial_x W_{\gamma^-}(x) \le - W_{\gamma^-}(x) - r \beta(W_{\gamma^-}(x)), & \hbox{for}\ x> x_{0\gamma^-}=\ln\varrho_{\gamma^-}.
\end{array} 
\right.
\end{equation}
Moreover, it can be proved that
\begin{equation*}
\varrho_{\gamma^+} \le \varrho\le \varrho_{\gamma^-}.
\end{equation*} 
Since, for initial conditions we have $W_{\gamma^\pm}(x_{0_{\gamma^\pm}}) =\frac{ rE}{\varrho_{\gamma^\pm}�}$ and $W(x_0) = \frac{rE}{\varrho}$ and so
\begin{equation*}
W_{\gamma^-}(x_{0_{\gamma^-}})\le W(x_0)\le W_{\gamma^+}(x_{0_{\gamma^+}}).
\end{equation*}
Using the comparison principle for solutions of ordinary differential inequalities in (\ref{Y11}) we conclude
\begin{equation*}
W_{\gamma^-}(x)\le W(x)\le W_{\gamma^+}(x).
\end{equation*}
Then taking into account the explicit solution of the function $V(S)$ from Theorem~\ref{theorem-main} we present the following result:
\begin{theorem}\cite[Theorem 3]{journal23}
\label{theo-comparison}
Let $(V(\cdot),\varrho)$ be the solution to the perpetual American pricing problem  (\ref{pperpetual})--(\ref{pperpetual-bc}). Then 
for any $S\ge0$ we have 
\[
 V_{\gamma^-}(S) \le V(S) \le V_{\gamma^+}(S) 
\]
and 
\[
 \varrho_{\gamma^+} \le \varrho \le \varrho_{\gamma^-}
\]
where $(V_{\gamma^\pm}(.), \varrho_{\gamma^\pm})$ are explicit Merton's solutions with constant volatilities.
\end{theorem}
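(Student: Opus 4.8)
Since the discussion preceding the statement already assembles all the ingredients, the plan is simply to organize them into a clean comparison argument carried out in the transformed variable $W$. Recall from Lemma~\ref{lemma-W} and Theorem~\ref{theorem-main} that, for $\sigma=\sigma(H)$, solving the free boundary problem (\ref{pperpetual})--(\ref{pperpetual-bc}) is equivalent to solving the scalar ODE $\partial_x W=-W-r\beta(W)$ with $W(x_0)=rEe^{-x_0}$, fixing $\varrho=e^{x_0}$ through the implicit equation (\ref{x0-root-simplified}), and then setting $V(S)=(S/r)\int_{\ln S}^\infty W_{x_0}(x)\,dx$. The Merton pairs $(V_{\gamma^\pm}(\cdot),\varrho_{\gamma^\pm})$ correspond to the explicit profiles $W_{\gamma^\pm}(x)=rE\varrho_{\gamma^\pm}^{\gamma^\pm}e^{-(1+\gamma^\pm)x}$ solving the linear ODEs $\partial_x W_{\gamma^\pm}+(1+\gamma^\pm)W_{\gamma^\pm}=0$ of (\ref{Y9}), so the whole theorem reduces to showing $W_{\gamma^-}\le W\le W_{\gamma^+}$ and translating this back through the integral formulas.

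The first task is to confirm the sub/super-solution inequalities (\ref{Y11}): comparing $-(1+\gamma^\pm)W_{\gamma^\pm}$ with $-W_{\gamma^\pm}-r\beta(W_{\gamma^\pm})$ amounts to controlling the sign of $\gamma^\pm W_{\gamma^\pm}-r\beta(W_{\gamma^\pm})$. Because $\sigma=\sigma(H)$ is nondecreasing with $\sigma(H)\ge\sigma_0=\sigma(0)$, the inverse function $\beta$ of $H\mapsto\frac12\sigma(H)^2H$ is squeezed between the two linear inverse functions attached to the constant volatilities $\sigma(0)$ and $\sigma(1+\gamma^+)$ (the latter being the value of $\sigma$ at the boundary Gamma $H=1+\gamma^+$ of the Merton profile $V_{\gamma^+}$); inserting the defining relations $\gamma^-\sigma(0)^2=2r$ and $\gamma^+\sigma(1+\gamma^+)^2=2r$ turns these bounds into precisely the one-sided differential inequalities for $W_{\gamma^-}$ and $W_{\gamma^+}$.

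Next one needs the bracketing $\varrho_{\gamma^+}\le\varrho\le\varrho_{\gamma^-}$, equivalently the ordering $x_{0\gamma^+}\le x_0\le x_{0\gamma^-}$ of the starting points. Here I would exploit that the map $\varrho\mapsto\int_0^{rE/\varrho}\beta(w)/(w+r\beta(w))\,dw$ is strictly decreasing (the upper limit shrinks, the integrand is positive), is monotone under pointwise replacement of $\beta$ by a larger or smaller inverse function, and equals $1$ at $\varrho_{\gamma^\pm}$ when $\beta$ is replaced by the corresponding linear comparison inverse function; combined with the pointwise domination of $\beta$ by those comparison functions this yields the sandwich on $\varrho$, whence $W_{\gamma^-}(x_{0\gamma^-})=rE/\varrho_{\gamma^-}\le rE/\varrho=W(x_0)\le rE/\varrho_{\gamma^+}=W_{\gamma^+}(x_{0\gamma^+})$. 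With the one-sided inequalities, the ordered starting points and the ordered initial values in hand, the classical comparison principle for the scalar ODE (valid since $\beta$ is $C^1$ for $w>0$ by Assumption~1 and Lemma~\ref{lemma-beta}) gives $W_{\gamma^-}(x)\le W(x)\le W_{\gamma^+}(x)$ for all $x\ge\ln S$ with $S\ge\varrho$. Feeding this into the representation $V(S)=(S/r)\int_{\ln S}^\infty W_{x_0}(x)\,dx$ and the analogous Merton formulas yields $V_{\gamma^-}(S)\le V(S)\le V_{\gamma^+}(S)$ for $S\ge\varrho$; for $0\le S\le\varrho$ one uses $V(S)=E-S=V_{\gamma^-}(S)$ on $[0,\varrho_{\gamma^-}]\supseteq[0,\varrho]$ for the lower bound, the convexity of $V_{\gamma^+}$ together with $V_{\gamma^+}(\varrho_{\gamma^+})=E-\varrho_{\gamma^+}$ and $\partial_S V_{\gamma^+}(\varrho_{\gamma^+})=-1$ for the bound $E-S\le V_{\gamma^+}(S)$, and the convexity of $V$ (a consequence of $S\partial_S^2V=\beta(\cdot,W)\ge0$) for the remaining inequality $V(S)\ge E-S$ on $[\varrho,\varrho_{\gamma^-}]$. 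Hence the bounds hold on all of $[0,\infty)$.

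I expect the real obstacle to be the middle step: coupling the nonlinear monotonicity of $\beta$ in $w$ with the monotone dependence of the implicit boundary equation on $\varrho$, and in particular checking that the linear comparison inverse functions dominate, respectively are dominated by, $\beta$ on the \emph{entire} integration range $(0,rE/\varrho)$ rather than merely asymptotically — this may call for a short bootstrap on the attainable range of the Gamma variable $H$. Once the starting points are correctly ordered, the ODE comparison and the passage back to $V$ are routine.
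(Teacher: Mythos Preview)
Your proposal follows exactly the route sketched in the paper preceding the theorem statement: establish the differential inequalities (\ref{Y11}) for the Merton profiles $W_{\gamma^\pm}$, order the free boundaries $\varrho_{\gamma^+}\le\varrho\le\varrho_{\gamma^-}$ (hence the initial data), invoke the ODE comparison principle to get $W_{\gamma^-}\le W\le W_{\gamma^+}$, and then pass back to $V$ via the integral representation of Theorem~\ref{theorem-main}. The range concern you flag at the end resolves cleanly once you notice that $W_{\gamma^+}(x_{0\gamma^+})=rE/\varrho_{\gamma^+}=r(1+\gamma^+)/\gamma^+=\tfrac12\sigma(1+\gamma^+)^2(1+\gamma^+)$, so on $(0,W_{\gamma^+}(x_{0\gamma^+})]$ the nonlinear inverse $\beta$ indeed dominates the linear one attached to $\sigma(1+\gamma^+)$ and no bootstrap is needed.
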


\section{Numerical approximation scheme and results}

In the last section, our aim is to present an efficient numerical scheme for constructing a solution to the perpetual American put option problem (\ref{pperpetual})--(\ref{pperpetual-bc}) for the case when the volatility function has the form: $\sigma=\sigma(H)$ where $H=S\partial^2_S V$. The numerical results were obtained by the authors in \cite{journal23}.

Our scheme is based on transformation $H=\beta(w)$, i.e. $w=\frac12\sigma(H)^2 H$ and $dw=\frac12 \partial_H(\sigma(H)^2 H) dH$ by using this we can rewrite the equation (\ref{x0-root-simplified}) for the free boundary position $\varrho$ as follows:
\begin{equation}
\int_0^{\beta(r E/\varrho)} \frac{H  }{\frac12\sigma(H)^2 H +r H} \frac12 \partial_H(\sigma(H)^2 H) d H =1.
\label{theo-rho-mod}
\end{equation}
Similarly, the expression (see Theorem~\ref{theorem-main}) for the price of the option can be rewritten in terms of the $H$ variable as follows:
\begin{equation}
V(S) = \frac{S}{r} \int_0^{\beta(F^{-1}(\ln(\varrho/S)))} 
\frac{\frac12\sigma(H)^2 H  }{\frac12\sigma(H)^2 H +r H} \frac12 \partial_H(\sigma(H)^2 H) d H.
\label{theo-V-mod}
\end{equation}
When the inverse function $\beta(w)$ is not given by a closed form formula by applying this transformation we can avoid computational complexity. 

In what follows we recall numerical results of computation of the solution to the perpetual American put option problem (\ref{pperpetual})--(\ref{pperpetual-bc}) for the RAPM model with the nonlinear volatility function of the form:
\begin{equation}\label{Y20}
\sigma(H)^2 = \sigma_0^2 \left(1+\lambda H^{\frac{1}{3}}\right),
\end{equation}
We propose the results of numerical calculation for the Risk adjusted pricing methodology model (RAPM). We would like to show the position of the free boundary $\varrho$ and the value of the perpetual option $V$ evaluated at exercise price $S=E$. The option values are computed for various values of the model $\lambda\in[0,2]$ for the RAPM model. The rest of the model parameters were chosen as: $r=0.1, E=100$ and $\sigma_0=0.3$. In computations shown in Tab.~\ref{tab:1} we present results of the free boundary position and the perpetual American put option price $V(E)$ for the RAPM model.

\begin{table}
\centering
\caption{The perpetual put option free boundary position $\varrho$ and the option price $V(S)$ evaluated at $S=E$ for various values of the model parameter $\lambda\ge0$ for the RAPM model (Source \cite{journal23}).} 
\label{tab:1}

\begin{tabular}{cccccccc}
\hline\noalign{\smallskip}
$\lambda$& 
0.00& 
0.20& 
0.40& 
0.60& 
1.20& 
1.60& 
2.00 \\
\noalign{\smallskip}\hline\noalign{\smallskip}
$\varrho$& 
68.9655& 
64.7181& 
61.2252& 
58.2647& 
51.1474&
47.2975& 
 44.5433\\
$V(E)$& 
13.5909& 
15.4853& 
 17.1580& 
18.6669 & 
22.5461& 
 24.7444  & 
 26.6804\\
\noalign{\smallskip}\hline
\end{tabular}


\end{table}

\begin{figure}
\centering
\includegraphics[height=5cm]{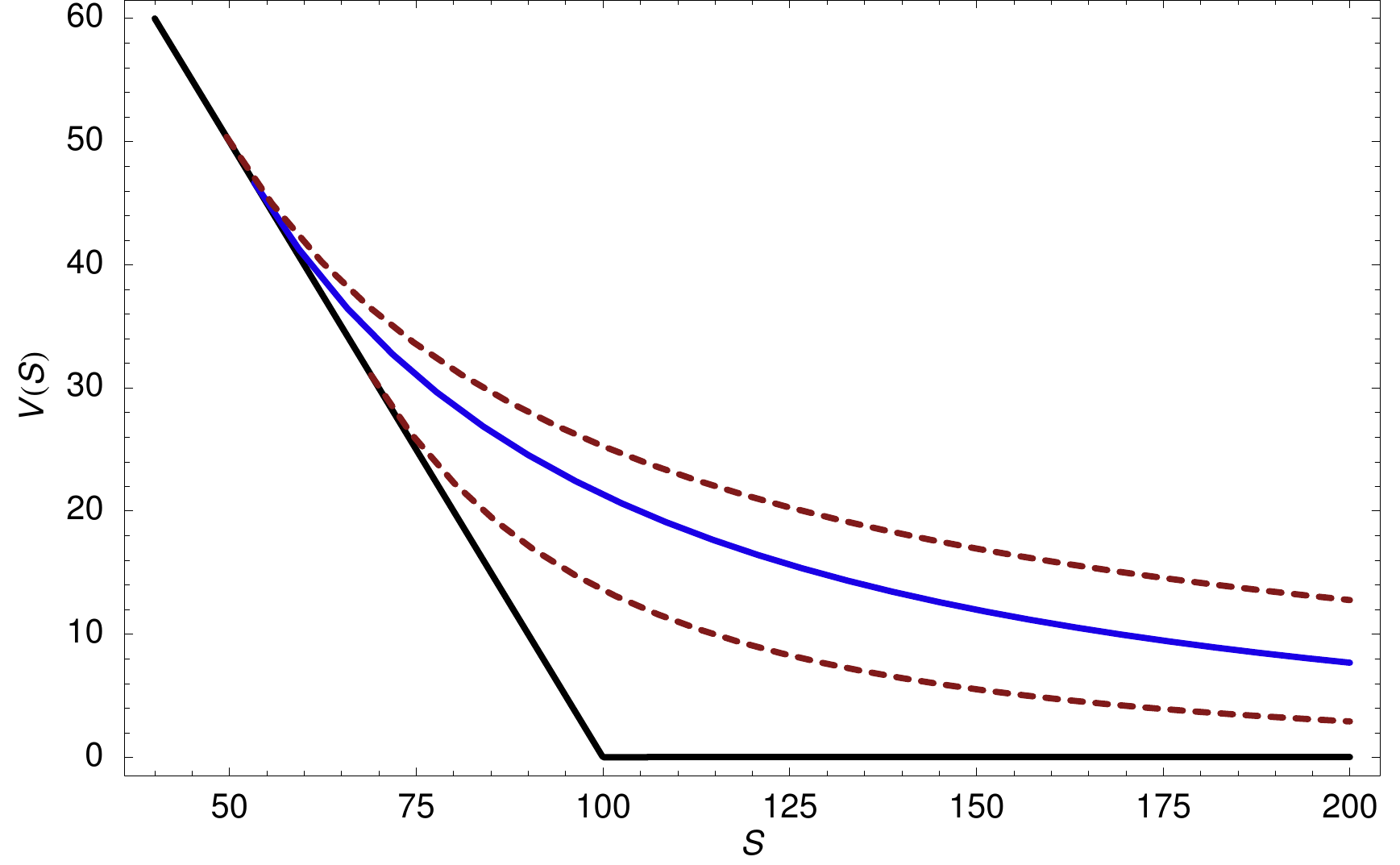}

\caption{Solid curve represents a graph of a perpetual American put option $V(S)$  for the RAPM model with $\lambda=1$. Sub- and super- solutions $V_{\gamma^-}$ and $V_{\gamma^+}$ are depicted by dashed curves. The model parameters: $r=0.1, E=100$ and $\sigma_0=0.3$ (Source \cite{journal23}).}
\label{obr-FreyRAPM-perpetual}
\end{figure}
Finally, in Fig.~\ref{obr-FreyRAPM-perpetual} we show the option price $V(S)$ for the Risk adjusted pricing methodology model with closed form explicit Merton's solutions with constant volatility.

\section{Conclusions}
In this paper we analyzed the problem of American style perpetual options when the nonlinear volatility is a function of the second derivative. We studied the free boundary problem that models this type of options, by transforming it into a single implicit equation for the free
boundary position and explicit integral expression for the option price. \bigskip


\noindent \textbf{Acknowledgements:} \medskip

\noindent This research was supported by the European Union in the FP7-PEOPLE-2012-ITN
project STRIKE - Novel Methods in Computational Finance (304617), the project CEMAPRE
MULTI/00491 financed by FCT/MEC through national funds and the Slovak research
Agency Project VEGA 1/0780/15.


%
%
%

%
%


\printindex
\end{document}